\theoremstyle{plain}
\newtheorem{thm}{\protect\theoremname}
\theoremstyle{plain}
\newtheorem{prop}[thm]{\protect\propositionname}
\theoremstyle{definition}
\newtheorem{defn}[thm]{\protect\definitionname}
\theoremstyle{plain}
\newtheorem{lem}[thm]{\protect\lemmaname}
\theoremstyle{plain}
\newtheorem{cor}[thm]{\protect\corollaryname}
\providecommand{\corollaryname}{Corollary}
\providecommand{\definitionname}{Definition}
\providecommand{\lemmaname}{Lemma}
\providecommand{\propositionname}{Proposition}
\providecommand{\theoremname}{Theorem}
\begin{document}

\title{User Capacity of Pilot-Contaminated\\
TDD Massive MIMO Systems}

\author{\IEEEauthorblockN{Juei-Chin Shen, Jun Zhang, and Khaled Ben Letaief, \textit{Fellow}, \textit{IEEE}}
\IEEEauthorblockA{Dept. of ECE, The Hong Kong University of Science and Technology\\
Email: \{eejcshen, eejzhang, eekhaled\}@ust.hk}}
\maketitle
\begin{abstract}
Pilot contamination has been regarded as a main limiting factor of
time division duplexing (TDD) massive multiple-input\textendash{}multiple-output
(Massive MIMO) systems, as it will make the signal-to-interference-plus-noise
ratio (SINR) saturated. However, how pilot contamination will limit
the user capacity of downlink Massive MIMO, i.e., the maximum number
of admissible users, has not been addressed. This paper provides an
explicit expression of the Massive MIMO user capacity in the pilot-contaminated
regime where the number of users is larger than the pilot sequence
length. Furthermore, the scheme for achieving the user capacity, i.e.,
the uplink pilot training sequence and downlink power allocation,
has been identified. By using this capacity-achieving scheme, the
SINR requirement of each user can be satisfied and energy-efficient
transmission is feasible in the large-antenna-size (LAS) regime. Comparison
with two non-capacity-achieving schemes highlights the superiority
of our proposed scheme in terms of achieving higher user capacity.%
\footnote{This work is supported by the Hong Kong Research Grant Council under
Grant No. 610212.%
}\end{abstract}
\begin{IEEEkeywords}
Massive MIMO, user capacity, pilot contamination, pilot-aided channel
estimation, power allocation.
\end{IEEEkeywords}

\section{Introduction}

Massive MIMO is regarded as an efficient and scalable approach for
multicell multiuser MIMO implementation. By deploying base stations
(BSs) with much more antennas than active user equipments (UEs), the
asymptotic orthogonality among MIMO channels becomes valid and in
turn it makes intra- and inter-cell interference more manageable.
Hence, using simple linear precoder and detector can approach the
optimal dirty-paper coding capacity \cite{Gao11linear}. Channel side
information (CSI) at BSs plays an important role in the exploitation
of channel orthogonality. In practice, TDD operation is assumed for
CSI acquisition through uplink training. The advantage of uplink training
is that the length of pilot training sequences is proportional to
the number of active UEs rather than that of BS antennas. The training
length is fundamentally limited by the channel coherence time, which
can be short due to UEs of high mobility. It has been shown, nevertheless,
in \cite{Marzetta06} that the effect of using short training sequences
diminishes in the LAS regime. Specifically, among the poorly estimated
channels, the asymptotic orthogonality can still hold. 

However, a major problem with TDD Massive MIMO is that inevitably
the same pilot sequences will be reused in multiple cells. The channels
to UEs in different cells who share the same pilot sequence will be
collectively learned by BSs. In other words, the desired channel learned
by a BS is contaminated by undesired channels. Once this contaminated
CSI is utilized for transmitting or receiving signals, intercell interference
occurs immediately which limits the achievable SINR. This phenomenon,
known as \emph{pilot contamination}, can not be circumvented simply
by increasing the BS antenna size \cite{Marzetta10}. Several attempts
have been made to tackle this problem. In \cite{Jose11}, a sophisticated
precoding method is proposed to minimize intercell interference due
to pilot contamination. A more direct approach to pilot decontamination,
which promises to purify the polluted channel information, can be
found in \cite{Yin13}. By harnessing second-order channel statistics,
it can remove contamination from undesired channels which occupy different
angle-of-arrival intervals from the desired channel. A recent study
\cite{Mueller14} claims that pilot contamination is due to inappropriate
linear channel estimation. Hence, by using the proposed subspace-based
estimation, unpolluted CSI is obtainable. One thing to note is that
this method should be accompanied by suitable frequency reuse pattern
and power control among UEs. Also, the asymptotic effectiveness of
the last two methods in the LAS regime has been analytically presented. 

The effect of pilot contamination is usually quantified as SINR saturation
due to intercell interference. A number of studies have examined this
saturation phenomenon and the corresponding uplink or downlink throughput
\cite{Marzetta10,Ngo13EE,Hoydis13ULDL}. The former two provide analysis
in the LAS regime with a fixed number of active UEs, while in \cite{Hoydis13ULDL},
it analyzes asymptotic SINRs with a fixed ratio of the BS antenna
size to the active-UE number. All these studies lead to a similar
conclusion that the SINR will saturate with an increasing antenna
size, making system throughput limited.

So far, however, there has been little discussion about the user capacity
of TDD Massive MIMO, which is the maximum number of UEs whose SINR
requirements can be met for a given pilot sequence length. This term
``user capacity'' was originally coined for analyzing CDMA systems
\cite{Viswanath99OptSeqPow}. In this paper, we confine our discussion
to the user capacity of single-cell downlink TDD Massive MIMO, where
pilot contamination will occur once the number of UEs is greater than
the pilot sequence length. Meanwhile, we consider a more general set
of pilot sequences whose cross-correlations can range from $-1$ to
$1$. In most studies of Massive MIMO, the cross-correlations are
restricted to be $1$ or $0$. Our discussion will show that the user
capacity can be characterized by a specific region within which there
exists a capacity-achieving pilot sequence and power allocation such
that the SINR requirements are satisfied and the user data can be
energy-efficiently transmitted. This result is significant in the
sense that within the specified region we have no worries about pilot
contamination by using the proposed allocation scheme. Meanwhile,
the derivation of this result only involves simple channel estimation
and linear precoding. It means that this region can exist without
relying on advanced estimation or precoding methods. Though this region
is identified within a single cell, it sheds light on the possible
existence of the similar region in the general multicell scenario.

Notations: $\mathbb{R}$: real number, $\mathbb{Z}$: integers, $\left\Vert \cdot\right\Vert _{p}$:
$p$-norm, $\left(\cdot\right){}^{T}$: transpose, $\left(\cdot\right){}^{H}$:
Hermitian transpose, $\otimes$: Kronecker product, $\circ$: Hadamard
product, $\mathbf{I}_{N}$: $N\times N$ identity matrix, $\mathcal{CN}\left(\cdot,\cdot\right)$:
complex normal distribution, $\mathbb{E}\left[\cdot\right]$: expectation,
$\mbox{tr}\left(\cdot\right)$: trace, $\mbox{diag}\left(\cdots\right)$:
diagonal matrix, $\succ,\succeq$: vector inequalities, $\mathbf{0}$:
zero vector, $\mbox{Null}\left(\cdot\right)$: null space, $\mbox{card}\left(\cdot\right)$:
cardinality.

\section{TDD System Model}

Consider a single-cell network where a BS equipped with $M$ antennas
serves $K$ single-antenna UEs, assuming TDD operation. The BS acquires
downlink CSI through uplink pilot training. The acquired CSI will
be utilized to form linear precoding matrix for downlink spatial multiplexing.

\subsection{Uplink Training}

During the uplink training phase, each UE transmits its own pilot
sequence $\mathbf{s}_{i}\in\mathbb{R}^{\tau\times1}$, where $i$
is the UE index, $\left\Vert \mathbf{s}_{i}\right\Vert _{2}=1$, and
$\mathbf{s}_{i}^{T}\mathbf{s}_{j}=\rho_{ij}$, which is the correlation
between different training sequences. The pilot data over the block-fading
channel, synchronously received at the BS, can be expressed as

\begin{equation}
\mathbf{y}_{\left(\tau M\times1\right)}=\sum_{i=1}^{K}\mathbf{S}_{i}\mathbf{h}_{i}+\mathbf{z},
\end{equation}
where $\mathbf{S}_{i}$ is the $\tau M\times M$ matrix given by $\mathbf{S}_{i}=\mathbf{s}_{i}\otimes\mathbf{I}_{M}$,
$\mathbf{z}$ is the additive Gaussian noise distributed as $\mathcal{CN}\left(\mathbf{0},\sigma_{z}^{2}\mathbf{I}_{\tau M}\right)$,
and $\mathbf{h}_{i}$ for $1\leq i\leq K$ are identically and independently
distributed (i.i.d.) channel vectors with distribution $\mathcal{CN}\left(\mathbf{0},\mathbf{I}_{M}\right)$.
The pilot length is implicitly assumed to be less than one channel
block. This channel model is commonly assumed in Massive MIMO systems
\cite{Marzetta10,Jose11,Ngo13EE}. A simple single-user least-squares
estimate of the $i$th channel vector is given by

\begin{eqnarray}
\mathbf{\hat{h}}_{i} & = & \mathbf{P}_{\tiny{\mbox{LS}},i}\mathbf{y},\nonumber \\
 & = & \mathbf{h}_{i}+\sum_{j\neq i}^{K}\rho_{ij}\mathbf{h}_{j}+\mathbf{S}_{i}^{T}\mathbf{z},
\end{eqnarray}
where $\mathbf{P}_{\tiny{\mbox{LS}},i}=\mathbf{S}_{i}^{T}$. This
estimate indicates how the desire channel information is polluted
by undesired channels in the pilot-contaminated regime $\left(K>\tau\right)$,
where $\rho_{ij}$ may not be $0$ for some $j\neq i$.

\subsection{Downlink Transmission}

Exploiting estimated CSI at the BS, maximum ratio transmission (MRT)
precoded data are formed and simultaneously transmitted to UEs. The
received signal at the $i$th UE is

\begin{eqnarray}
\mathbf{r}_{i} & = & \mathbf{h}_{i}^{H}\left(\sum_{j=1}^{K}\mathbf{t}_{j}x_{j}\right)+w_{i},
\end{eqnarray}
where $\mathbf{t}_{i}\triangleq\nicefrac{\hat{\mathbf{h}}_{i}}{\left\Vert \hat{\mathbf{h}}_{i}\right\Vert _{2}}$
is a MRT precoding vector, $x_{i}$ denotes uncorrelated zero-mean
data with power $\mathbb{E}\left[x_{i}^{H}x_{i}\right]=P_{i}$, and
$w_{i}$ is the zero-mean noise with variance $\sigma_{w}^{2}$. In
the LAS regime ($M\gg K$), the following asymptotic results can be
applied \cite{Ngo13EE}

\[
\lim_{M\rightarrow\infty}\frac{1}{M}\mathbf{h}_{i}^{H}\mathbf{h}_{j}=\begin{cases}
0, & \mbox{if }i\neq j,\\
1, & \mbox{if }i=j.
\end{cases}
\]
Such asymptotic orthogonality has been experimentally verified in
realistic propagation environments \cite{Gao11linear}. With this
in mind, the received signal $\mathbf{r}_{i}$ can be approximated
by

\begin{equation}
\mathbf{r}_{i}\approx\sum_{j=1}^{K}\frac{M\rho_{ji}x_{j}}{\sqrt{M\left(\sum_{l=1}^{K}\rho_{il}^{2}\right)}}+w_{i},
\end{equation}
due to

\begin{eqnarray}
\mathbf{h}_{i}^{H}\mathbf{t}_{j} & = & \nicefrac{\left(\frac{\mathbf{h}_{i}^{H}\mathbf{h}_{j}}{M}+\sum_{l\neq j}^{K}\rho_{jl}\frac{\mathbf{h}_{i}^{H}\mathbf{h}_{l}}{M}+\frac{\mathbf{h}_{i}^{H}\mathbf{S}_{j}^{H}\mathbf{z}}{M}\right)}{\frac{\left\Vert \hat{\mathbf{h}}_{j}\right\Vert _{2}}{M}},\nonumber \\
 & \approx & \nicefrac{\rho_{ji}}{\frac{\left\Vert \hat{\mathbf{h}}_{j}\right\Vert _{2}}{M}},
\end{eqnarray}

\begin{eqnarray}
\frac{\left\Vert \hat{\mathbf{h}}_{i}\right\Vert _{2}}{M} & \approx & \sqrt{\frac{1}{M}\left(\sum_{j=1}^{K}\rho_{ij}^{2}\right)},
\end{eqnarray}
and some second-order results, i.e., $\lim_{M\rightarrow\infty}\nicefrac{\mathbf{h}_{j}^{H}\mathbf{S}_{i}^{T}\mathbf{z}}{M^{2}}=0$
and $\lim_{M\rightarrow\infty}\nicefrac{\mathbf{z}^{H}\mathbf{S}_{i}\mathbf{S}_{i}^{T}\mathbf{z}}{M^{2}}=0$.
The corresponding SINR is given by

\begin{eqnarray}
\mbox{SINR}_{i} & \approx & \frac{P_{i}}{\sum_{j\neq i}^{K}\rho_{ji}^{2}P_{j}},\nonumber \\
 & = & \frac{P_{i}}{\mbox{tr}\left(s_{i}^{T}\mathbf{S}\mathbf{D}\mathbf{S}^{T}s_{i}\right)-P_{i}},\label{eq: SINR approximation}
\end{eqnarray}
where $\mathbf{D}=\mbox{diag}\left(P_{1},\cdots,P_{K}\right)$, $\mathbf{S}=\left[\mathbf{s}_{1},\mathbf{s}_{2},\cdots,\mathbf{s}_{K}\right]$,
and the fact $\lim_{M\rightarrow\infty}\nicefrac{\sigma_{w}^{2}}{M}=0$
is applied. This SINR expression tells that the downlink transmission
operates in the interference-limited regime because of using a large
number of antennas. Moreover, the interference part, $\sum_{j\neq i}^{K}\rho_{ji}^{2}P_{j}$,
can not be simultaneously eliminated for every user in the pilot-contaminated
regime as non-orthogonal pilot sequences have to be used.

\section{User Capacity}

A group of UEs is said to be \emph{admissible} in the specified TDD
Massive MIMO system if there exists a feasible pilot sequence matrix
$\mathbf{S}$ and a power allocation vector $\mathbf{p}=\left[P_{1},\cdots,P_{K}\right]^{T}\succ\mathbf{0}$
such that the SINR requirements, $\mbox{SINR}_{i}\geq\gamma_{i}$
for $1\leq i\leq K$, can be jointly satisfied. A pilot sequence matrix
$\mathbf{S}$ is feasible if $\mathbf{S}\in\mathcal{S}=\left\{ \left[\mathbf{s}_{1},\mathbf{s}_{2},\cdots,\mathbf{s}_{K}\right],\mbox{ }\mathbf{s}_{i}\in\mathbb{R}^{\tau\times1}|\left\Vert \mathbf{s}_{i}\right\Vert _{2}=1\right\} $. 

In the following discussion, we will treat the approximation in (\ref{eq: SINR approximation})
as the exact SINR expression, and focus on the pilot-contaminated
regime. The proposition below gives the upper bound of the maximum
number of admissible UEs.
\begin{prop}
\label{Prop01} If $K$ UEs are admissible in the TDD Massive MIMO
system, then\label{prop: Prop01}
\end{prop}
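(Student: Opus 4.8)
The plan is to compress the $K$ SINR requirements into a single trace inequality on the $\tau\times\tau$ pilot covariance matrix $\mathbf{Q}\triangleq\mathbf{S}\mathbf{D}\mathbf{S}^{T}=\sum_{j=1}^{K}P_{j}\mathbf{s}_{j}\mathbf{s}_{j}^{T}\succeq\mathbf{0}$ and read the bound off its trace. Since $\mathbf{D}\succ\mathbf{0}$, $\mbox{rank}(\mathbf{Q})=\mbox{rank}(\mathbf{S})\le\tau$, and without loss of generality $\mathbf{s}_{1},\dots,\mathbf{s}_{K}$ span $\mathbb{R}^{\tau}$ (otherwise restrict to their span, which only strengthens the claim), so $\mathbf{Q}\succ\mathbf{0}$. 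The elementary identity is $\mathbf{s}_{i}^{T}\mathbf{Q}\mathbf{s}_{i}=\sum_{j=1}^{K}\rho_{ji}^{2}P_{j}=P_{i}+\sum_{j\ne i}\rho_{ji}^{2}P_{j}$ (using $\rho_{ii}=\Vert\mathbf{s}_{i}\Vert_{2}^{2}=1$), so the requirement $\mbox{SINR}_{i}\ge\gamma_{i}$ from (\ref{eq: SINR approximation}) reads $P_{i}(1+\gamma_{i})\ge\gamma_{i}\,\mathbf{s}_{i}^{T}\mathbf{Q}\mathbf{s}_{i}$; since $\mathbf{s}_{i}^{T}\mathbf{Q}\mathbf{s}_{i}\ge P_{i}>0$ we may divide to get $\gamma_{i}/(1+\gamma_{i})\le P_{i}/(\mathbf{s}_{i}^{T}\mathbf{Q}\mathbf{s}_{i})$ for every $i$.

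Then I would sum over $i$ and bound the right-hand side by $\tau$. With $\mathbf{v}_{i}\triangleq\sqrt{P_{i}}\,\mathbf{s}_{i}$ one has $\mathbf{Q}=\sum_{i}\mathbf{v}_{i}\mathbf{v}_{i}^{T}$, $\Vert\mathbf{v}_{i}\Vert_{2}^{2}=P_{i}$, and the $i$th summand becomes $\Vert\mathbf{v}_{i}\Vert_{2}^{4}/(\mathbf{v}_{i}^{T}\mathbf{Q}\mathbf{v}_{i})$. Cauchy--Schwarz applied to $\mathbf{Q}^{1/2}\mathbf{v}_{i}$ and $\mathbf{Q}^{-1/2}\mathbf{v}_{i}$ gives $(\mathbf{v}_{i}^{T}\mathbf{v}_{i})^{2}\le(\mathbf{v}_{i}^{T}\mathbf{Q}\mathbf{v}_{i})(\mathbf{v}_{i}^{T}\mathbf{Q}^{-1}\mathbf{v}_{i})$, hence each summand is $\le\mathbf{v}_{i}^{T}\mathbf{Q}^{-1}\mathbf{v}_{i}$, and
\[
\sum_{i=1}^{K}\frac{\gamma_{i}}{1+\gamma_{i}}\;\le\;\sum_{i=1}^{K}\mathbf{v}_{i}^{T}\mathbf{Q}^{-1}\mathbf{v}_{i}\;=\;\mbox{tr}\Big(\mathbf{Q}^{-1}\sum_{i=1}^{K}\mathbf{v}_{i}\mathbf{v}_{i}^{T}\Big)\;=\;\mbox{tr}\big(\mathbf{Q}^{-1}\mathbf{Q}\big)\;=\;\tau .
\]
Equivalently $K\le\tau+\sum_{i=1}^{K}1/(1+\gamma_{i})$, which I expect is the form of the stated bound.

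Because the computation is this short, there is no genuine obstacle --- only bookkeeping. The division and Cauchy--Schwarz steps are legitimate because $\mathbf{s}_{i}^{T}\mathbf{Q}\mathbf{s}_{i}\ge P_{i}>0$. If one dislikes the ``span $\mathbb{R}^{\tau}$'' reduction, one instead keeps $r\triangleq\mbox{rank}(\mathbf{Q})\le\tau$, replaces $\mathbf{Q}^{-1}$ by the Moore--Penrose pseudoinverse $\mathbf{Q}^{\dagger}$ (every $\mathbf{v}_{i}$ lies in $\mbox{range}(\mathbf{Q})=\mbox{span}\{\mathbf{s}_{1},\dots,\mathbf{s}_{K}\}$ since all $P_{j}>0$), and obtains the a priori sharper $\sum_{i}\gamma_{i}/(1+\gamma_{i})\le\mbox{tr}(\mathbf{Q}^{\dagger}\mathbf{Q})=r\le\tau$. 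The argument is the pilot-contamination counterpart of the effective-bandwidth feasibility converse from optimal CDMA sequence design \cite{Viswanath99OptSeqPow}; the matching sufficiency claim --- exhibiting a feasible $\mathbf{S}$ and a $\mathbf{p}\succ\mathbf{0}$ whenever $\sum_{i}\gamma_{i}/(1+\gamma_{i})$ stays below $\tau$ --- would be a separate explicit construction.
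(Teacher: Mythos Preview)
Your argument is correct, but the inequality you prove is not the one the paper states: Proposition~\ref{Prop01} asserts the weaker bound $K^{2}\le\tau\sum_{i=1}^{K}(1+1/\gamma_{i})$, whereas you establish $\sum_{i=1}^{K}\gamma_{i}/(1+\gamma_{i})\le\tau$. Your inequality implies the paper's via Cauchy--Schwarz on the pair $\big(\sqrt{(1+\gamma_{i})/\gamma_{i}}\,\big)_{i}$ and $\big(\sqrt{\gamma_{i}/(1+\gamma_{i})}\,\big)_{i}$; this implication is in fact the opening step of the paper's proof of Proposition~\ref{prop:Prop02}. The two bounds coincide only when all $\gamma_{i}$ are equal.

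The paper's own proof works on the $K\times K$ Gram matrix $\mathbf{G}_{\mathbf{s}}=\mathbf{S}^{T}\mathbf{S}$: it writes $\sum_{i}(1+\mbox{SINR}_{i})/\mbox{SINR}_{i}=\mbox{tr}\big(\mathbf{D}^{-1/2}\mathbf{G}_{\mathbf{s}}\mathbf{D}\mathbf{G}_{\mathbf{s}}\mathbf{D}^{-1/2}\big)$, applies AM--GM termwise to drop the power ratios $P_{i}/P_{j}+P_{j}/P_{i}\ge2$, and then bounds $\mbox{tr}(\mathbf{G}_{\mathbf{s}}^{2})=\sum_{i}d_{i}^{2}\ge K^{2}/\tau$ via Cauchy--Schwarz on the nonzero eigenvalues of $\mathbf{G}_{\mathbf{s}}$. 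You instead work on the $\tau\times\tau$ side with $\mathbf{Q}=\mathbf{S}\mathbf{D}\mathbf{S}^{T}$, apply Cauchy--Schwarz vector-wise to $\mathbf{Q}^{1/2}\mathbf{v}_{i}$ and $\mathbf{Q}^{-1/2}\mathbf{v}_{i}$, and close with the trace identity $\sum_{i}\mathbf{v}_{i}^{T}\mathbf{Q}^{-1}\mathbf{v}_{i}=\mbox{tr}(\mathbf{Q}^{-1}\mathbf{Q})=\tau$. The payoff of your route is substantial: the paper explicitly flags (in its fourth remark following Corollary~\ref{Corollary01}) the regime where $K^{2}\le\tau\sum_{i}(1+1/\gamma_{i})$ holds but $\sum_{i}\gamma_{i}/(1+\gamma_{i})>\tau$ as ``not characterized''; your argument shows that regime is empty, so the effective-bandwidth condition of Proposition~\ref{prop:Prop02} is in fact both necessary and sufficient for admissibility.
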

\begin{equation}
K\leq\left[\tau\left(\sum_{i=1}^{K}1+\frac{1}{\gamma_{i}}\right)\right]^{\nicefrac{1}{2}}.\label{eq: upper bound of admissible users}
\end{equation}

\begin{proof} Making use of (\ref{eq: SINR approximation}), we have

\begin{multline}
\sum_{i=1}^{K}\frac{1+\mbox{SINR}_{i}}{\mbox{SINR}_{i}}=\sum_{i=1}^{K}\frac{1}{P_{i}}\mbox{tr}\left(s_{i}^{T}\mathbf{S}\mathbf{D}\mathbf{S}^{T}s_{i}\right),\\
=\mbox{tr}\left(\mathbf{D}^{-1}\mathbf{S}^{T}\mathbf{S}\mathbf{D}\mathbf{S}^{T}\mathbf{S}\right),\\
=\mbox{tr}\left(\mathbf{D}^{\nicefrac{-1}{2}}\mathbf{G}_{\mathbf{s}}\mathbf{D}\mathbf{G}_{\mathbf{s}}\mathbf{D}^{\nicefrac{-1}{2}}\right),\label{eq: trace inequality 01}
\end{multline}
where

\begin{eqnarray}
\mathbf{G}_{\mathbf{s}} & \triangleq & \mathbf{S}^{T}\mathbf{S},\nonumber \\
 & = & \left[\begin{array}{ccccc}
1 & \rho_{12} & \rho_{13} & \cdots & \rho_{1K}\\
\rho_{12} & 1 & \rho_{23} & \cdots & \rho_{2K}\\
\rho_{13} & \rho_{23} & 1 & \cdots & \rho_{3K}\\
\vdots & \vdots & \vdots & \ddots & \vdots\\
\rho_{1K} & \rho_{2K} & \rho_{3K} & \cdots & 1
\end{array}\right].
\end{eqnarray}
Also, we can expand the trace in (\ref{eq: trace inequality 01})
and obtain its lower bound as below

\begin{align}
 & \mbox{tr}\left(\mathbf{D}^{\nicefrac{-1}{2}}\mathbf{G}_{\mathbf{s}}\mathbf{D}\mathbf{G}_{\mathbf{s}}\mathbf{D}^{\nicefrac{-1}{2}}\right)\nonumber \\
= & K+\sum_{i=1}^{K}\sum_{j>i=1}^{K}\left(\frac{P_{i}}{P_{j}}+\frac{P_{j}}{P_{i}}\right)\rho_{ij}^{2},\nonumber \\
\geq & K+\sum_{i=1}^{K}\sum_{j>i=1}^{K}2\rho_{ij}^{2},\nonumber \\
= & \mbox{tr}\left(\mathbf{G}_{\mathbf{s}}\mathbf{G}_{\mathbf{s}}\right),
\end{align}
where the inequality is due to $\left(\nicefrac{P_{i}}{P_{j}}+\nicefrac{P_{j}}{P_{i}}\right)\geq2$.
The Gram matrix $\mathbf{G}_{\mathbf{s}}$ has an eigendecomposition
$\mathbf{U}\mathbf{D_{G}}\mathbf{U}^{T}$, where $\mathbf{U}$ is
an unitary matrix and $\mathbf{D_{G}}=\mbox{diag}\left(d_{1},\cdots,d_{K}\right)$
with $d_{1}\sim d_{\tau}>0$, $d_{\tau+1}\sim d_{K}=0$, and $\sum_{i=1}^{\tau}d_{i}=K$.
Then, we have

\begin{eqnarray}
\mbox{tr}\left(\mathbf{G}_{\mathbf{s}}\mathbf{G}_{\mathbf{s}}\right) & = & \mbox{tr}\left(\mathbf{U}\mathbf{D}_{\mathbf{G}}^{2}\mathbf{U}^{T}\right),\nonumber \\
 & = & \sum_{i=1}^{\tau}d_{i}^{2},\nonumber \\
 & \geq & \frac{1}{\tau}\left(\sum_{i=1}^{\tau}d_{i}\right)^{2}=\frac{K^{2}}{\tau}.
\end{eqnarray}
As

\begin{equation}
\sum_{i=1}^{K}\frac{1+\mbox{SINR}_{i}}{\mbox{SINR}_{i}}\leq\sum_{i=1}^{K}\frac{1+\gamma_{i}}{\gamma_{i}},\label{eq: SINR inequality}
\end{equation}
the desired inequality follows. \end{proof}

It is clear from this proposition that the number of admissible UEs
is fundamentally limited once the length of pilot sequences and the
SINR requirements are given. To offer another explanation, let's define
the normalized mean-square error seen by the $i$th UE as $\mbox{MSE}_{i}=\nicefrac{1}{\mbox{SINR}_{i}}$.
A lower bound on the sum of $\mbox{MSE}_{i}$ is given by

\begin{eqnarray}
\sum_{i=1}^{K}\mbox{MSE}_{i} & = & \mbox{tr}\left(\mathbf{D}^{-1}\mathbf{S}^{T}\mathbf{S}\mathbf{D}\mathbf{S}^{T}\mathbf{S}\right)-K,\nonumber \\
 & \geq & \frac{K^{2}}{\tau}-K.\label{eq: MSE inequality}
\end{eqnarray}
Appealing to (\ref{eq: SINR inequality}) and (\ref{eq: MSE inequality})
leads to the same result as Proposition \ref{Prop01}, which links
the bound on the admissible UEs to the bound on the sum of mean-square
errors.

Proposition \ref{prop: Prop01} provides an upper bound for the user
capacity. The next question to ask is whether $K$ UEs are admissible
if the inequality (\ref{eq: upper bound of admissible users}) is
satisfied, i.e., the achievability issue. In other words, once the
UE number is less than or equal to the upper bound, we wonder if there
exists a set of $\mathbf{S}\in\mathcal{S}$ and $\mathbf{p}\succ\mathbf{0}$,
fulfilling the SINR requirements. The following section will show
that the answer to this question is positive.

\section{Capacity-Achieving Pilot Sequence and Power Allocation}

Validating the converse of Proposition \ref{prop: Prop01} requires
to identify $\mathbf{S}$ and $\mathbf{p}$ with which the given SINR
requirements are met. The constraints, $\mbox{SINR}_{i}\geq\gamma_{i}$
for $1\leq i\leq K$, can be recast as $\mathbf{A}\mathbf{p}\succeq\mathbf{0}$,
where

\begin{equation}
\mathbf{A}=\left[\begin{array}{cccc}
\frac{1}{\gamma_{1}} & -\rho_{12}^{2} & \cdots & -\rho_{1K}^{2}\\
-\rho_{21}^{2} & \frac{1}{\gamma_{2}} & \cdots & -\rho_{2K}^{2}\\
\vdots & \vdots & \ddots & \vdots\\
-\rho_{K1}^{2} & -\rho_{K2}^{2} & \cdots & \frac{1}{\gamma_{K}}
\end{array}\right].
\end{equation}
If there exists $\mathbf{p}\succ\mathbf{0}$ in the null space of
$\mathbf{A}$, any linearly scaled $\alpha\mathbf{p}$ for $\alpha>0$
is still a valid solution of the problem. In this case, the total
transmission power can be made fairly small. The result is due to
downlink Massive MIMO being in the interference-limited regime. The
specific definition of a valid set of $\mathbf{S}$ and $\mathbf{p}$
is given below.
\begin{defn}
The set of a pilot sequence matrix $\mathbf{S}\in\mathcal{S}$ and
a power allocation vector $\mathbf{p}\succ\mathbf{0}$ is said to
be valid if
\[
\mathbf{p}\in\mbox{Null}\left(\mathbf{T}-\mathbf{G}_{\mathbf{s}}^{T}\circ\mathbf{G}_{\mathbf{s}}\right),
\]
where $\mathbf{T}=\mbox{diag}\left(1+\frac{1}{\hat{\gamma}_{1}},\cdots,1+\frac{1}{\hat{\gamma}_{K}}\right)$,
$\hat{\gamma}_{i}\geq\gamma_{i}$ for $1\leq i\leq K$, and $\mbox{rank}\left(\mbox{Null}\left(\mathbf{T}-\mathbf{G}_{\mathbf{s}}^{T}\circ\mathbf{G}_{\mathbf{s}}\right)\right)>1$.
\end{defn}
In this definition, $\mathbf{A}=\mathbf{T}-\mathbf{G}_{\mathbf{s}}^{T}\circ\mathbf{G}_{\mathbf{s}}$
when $\hat{\gamma}_{i}=\gamma_{i}$. Also, if the achieved $\mbox{SINR}_{i}=\hat{\gamma}_{i}$
is higher than the required $\gamma_{i}$, we regard the corresponding
$\mathbf{S}$ and $\mathbf{p}$ as valid. The following proposition
will specify a region where a valid pilot sequence and power allocation
can exist.
\begin{prop}
If 

\begin{equation}
\sum_{i=1}^{K}\left(\frac{\gamma_{i}}{1+\gamma_{i}}\right)\leq\tau,\label{eq: UC for General SINRs}
\end{equation}
 then $K\leq\left[\tau\left(\sum_{i=1}^{K}1+\frac{1}{\gamma_{i}}\right)\right]^{\nicefrac{1}{2}}$
and there exists a valid pilot sequence and power allocation. \label{prop:Prop02}
\end{prop}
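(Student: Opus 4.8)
The plan is to treat the two assertions separately, noting that the inequality $K\le[\tau(\sum_i 1+1/\gamma_i)]^{1/2}$ will actually follow from the existence part together with Proposition~\ref{Prop01}, although it is just as quick to prove it directly: setting $\beta_i=\gamma_i/(1+\gamma_i)$, so that $1/\beta_i=1+1/\gamma_i$, Cauchy--Schwarz gives
\[
K=\sum_{i=1}^{K}\sqrt{\beta_i}\cdot\tfrac{1}{\sqrt{\beta_i}}\le\Bigl(\sum_{i=1}^{K}\beta_i\Bigr)^{1/2}\Bigl(\sum_{i=1}^{K}\tfrac{1}{\beta_i}\Bigr)^{1/2}\le\tau^{1/2}\Bigl(\sum_{i=1}^{K}\bigl(1+\tfrac{1}{\gamma_i}\bigr)\Bigr)^{1/2},
\]
using $\sum_i\beta_i\le\tau$ in the last step; squaring yields the bound.

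For the existence of a valid pair I would first unpack the Definition: since $\mathbf{G}_{\mathbf{s}}^{T}\circ\mathbf{G}_{\mathbf{s}}=[\rho_{ij}^{2}]$, the requirement $\mathbf{p}\in\mbox{Null}(\mathbf{T}-\mathbf{G}_{\mathbf{s}}^{T}\circ\mathbf{G}_{\mathbf{s}})$ is precisely $\sum_{j}\rho_{ij}^{2}P_{j}=(1+1/\hat{\gamma}_{i})P_{i}$ for every $i$, i.e.\ $\mbox{SINR}_{i}=\hat{\gamma}_{i}$. Hence the task reduces to producing unit vectors $\mathbf{s}_{1},\dots,\mathbf{s}_{K}\in\mathbb{R}^{\tau}$ and powers $P_{i}>0$ for which $\sum_{j}\rho_{ij}^{2}P_{j}$ equals a prescribed per-user value with the induced $\hat{\gamma}_{i}$ no smaller than $\gamma_{i}$.

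The construction I would use is a weighted tight frame. Pick numbers $c_{i}$ with $\gamma_{i}/(1+\gamma_{i})\le c_{i}<1$ and $\sum_{i=1}^{K}c_{i}=\tau$; this is feasible because the hypothesis gives $\sum_{i}\gamma_{i}/(1+\gamma_{i})\le\tau$, while the total headroom $\sum_{i}\bigl(1-\gamma_{i}/(1+\gamma_{i})\bigr)=K-\sum_{i}\gamma_{i}/(1+\gamma_{i})\ge K-\tau>0$ in the pilot-contaminated regime $K>\tau$ is more than enough to raise the sum to exactly $\tau$. Then, since $c_{i}\le1$ for all $i$ and $\sum_{i}c_{i}=\tau$, there exist unit vectors $\mathbf{s}_{i}\in\mathbb{R}^{\tau}$ with $\sum_{i=1}^{K}c_{i}\mathbf{s}_{i}\mathbf{s}_{i}^{T}=\mathbf{I}_{\tau}$ (existence of a tight frame with prescribed norms --- this is exactly the majorization/Schur--Horn condition, and a recursive construction is available). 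Take $\mathbf{S}=[\mathbf{s}_{1},\dots,\mathbf{s}_{K}]$, $P_{i}=c_{i}$, and $\hat{\gamma}_{i}=c_{i}/(1-c_{i})$. Then $\sum_{j}\rho_{ij}^{2}P_{j}=\mathbf{s}_{i}^{T}\bigl(\sum_{j}c_{j}\mathbf{s}_{j}\mathbf{s}_{j}^{T}\bigr)\mathbf{s}_{i}=1$, so the interference at user $i$ is $1-P_{i}>0$ and $\mbox{SINR}_{i}=c_{i}/(1-c_{i})=\hat{\gamma}_{i}$, which is $\ge\gamma_{i}$ because $t\mapsto t/(1-t)$ is increasing on $(0,1)$ and $c_{i}\ge\gamma_{i}/(1+\gamma_{i})$. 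Since $1+1/\hat{\gamma}_{i}=1/c_{i}$, one gets $\mathbf{T}\mathbf{p}=\mathbf{1}=(\mathbf{G}_{\mathbf{s}}^{T}\circ\mathbf{G}_{\mathbf{s}})\mathbf{p}$, so $\mathbf{p}=(c_{1},\dots,c_{K})^{T}\succ\mathbf{0}$ lies in the required null space and $\mathbf{S}\in\mathcal{S}$.

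The step I expect to be the real obstacle is the frame-existence claim --- constructing unit vectors whose $c_{i}$-weighted dyads sum to the identity. This is the ``tight frame with prescribed norms'' problem, solvable precisely because $c_{i}\le1$ and $\sum_{i}c_{i}=\tau$; I would either cite this or give a short induction that peels off one sequence at a time while keeping the residual frame operator a scalar on a subspace. A secondary technicality is the final clause of the Definition, that $\mbox{Null}(\mathbf{T}-\mathbf{G}_{\mathbf{s}}^{T}\circ\mathbf{G}_{\mathbf{s}})$ have dimension exceeding one; I would secure this using the freedom in the choice of the $c_{i}$ when $\sum_{i}\gamma_{i}/(1+\gamma_{i})<\tau$, or by choosing a sufficiently symmetric frame (treating the degenerate case $\tau=1$ separately). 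Everything else is the routine verification indicated above.
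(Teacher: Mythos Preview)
Your proposal is correct and follows essentially the same route as the paper: Cauchy--Schwarz for the inequality, and for existence a weighted tight frame $\sum_i c_i\mathbf{s}_i\mathbf{s}_i^T=\mathbf{I}_\tau$ whose feasibility is exactly the Schur--Horn/majorization condition (the paper invokes this as Lemma~\ref{lem: Majorization} and builds $\mathbf{S}=\Sigma^{1/2}\mathbf{V}^T\mathbf{D}^{-1/2}$ with $\Sigma=c\mathbf{I}_\tau$, which is the same object). The only cosmetic difference is that the paper first treats the equality case $\sum_i\gamma_i/(1+\gamma_i)=\tau$ and then lifts the strict-inequality case by monotonicity of $x/(1+x)$, whereas you absorb both cases into a single choice of the $c_i$; and, like you, the paper does not separately verify the rank clause in the Definition.
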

\begin{proof} The Cauchy\textendash{}Schwarz inequality gives

\begin{eqnarray*}
\sum_{i=1}^{K}\left(1+\frac{1}{\gamma_{i}}\right) & \geq & \frac{K^{2}}{\sum_{i=1}^{K}\left(\frac{\gamma_{i}}{1+\gamma_{i}}\right)},\\
 & \geq & \frac{K^{2}}{\tau},
\end{eqnarray*}
which is equivalent to (\ref{eq: upper bound of admissible users})
and proves the first part of the statement. Before going on to the
second part, a definition and a lemma to be utilized later are provided.
\begin{defn}
Given $\mathbf{x},\mbox{ }\mathbf{y}\in\mathbb{R}^{N}$, $\mathbf{x}$
majorizes $\mathbf{y}$ if

\[
\sum_{k=1}^{n}x_{\left[k\right]}\geq\sum_{k=1}^{n}y_{\left[k\right]},\mbox{ for }n=1,\cdots,N,
\]
where $x_{\left[k\right]}$ and $y_{\left[k\right]}$ are respectively
the elements of $\mathbf{x}$ and $\mathbf{y}$ in decreasing order.\end{defn}
\begin{lem}
\cite[Theorem 9.B.2]{Marshall10} Given $\mathbf{x},\mbox{ }\mathbf{y}\in\mathbb{R}^{N}$,
if $\mathbf{x}$ majorizes $\mathbf{y}$ and $\sum_{k=1}^{N}x_{\left[k\right]}=\sum_{k=1}^{N}y_{\left[k\right]}$,
then there exists a real symmetric matrix $H$ with diagonal elements
$y_{\left[k\right]}$ and eigenvalues $x_{\left[k\right]}$.\label{lem: Majorization}
\end{lem}
First consider the case of $\sum_{i=1}^{K}\left(\frac{\gamma_{i}}{1+\gamma_{i}}\right)=\tau$. 

Given that the $1\times K$ vector of eigenvalues $\mathbf{e}=\left[\lambda_{1},\cdots,\lambda_{\tau},0\cdots,0\right]^{T}$
majorizes $\mathbf{p}$ and $\sum_{i=1}^{\tau}\lambda_{i}=\sum_{i=1}^{K}P_{i}$,
because of Lemma \ref{lem: Majorization}, there exists a real symmetric
matrix $\mathbf{H}=\mathbf{Q}\Lambda\mathbf{Q}^{T}$, where the vector
of diagonal entries of $\mathbf{H}$ is equal to $\mathbf{p}$, $\Lambda=\mbox{diag}\left(\lambda_{1},\cdots,\lambda_{\tau},0\cdots,0\right)$,
and the orthogonal matrix $\mathbf{Q}$ can be presented as

\[
\left[\begin{array}{cc}
\mathbf{V}_{K\times\tau} & \tilde{\mathbf{V}}_{K\times\left(K-\tau\right)}\end{array}\right].
\]
The approach to constructing $\mathbf{Q}$ as well as $\mathbf{H}$
is provided in \cite[Sec. IV-A]{Viswanath99}. Define that

\begin{equation}
\mathbf{S}\triangleq\Sigma^{\nicefrac{1}{2}}\mathbf{V}^{T}\mathbf{D}^{\nicefrac{-1}{2}},
\end{equation}
where $\Sigma=\mbox{diag}\left(\lambda_{1},\cdots,\lambda_{\tau}\right)$.
Then, $\mathbf{S}\in\mathcal{S}$ is true as the diagonal entries
of $\mathbf{S}^{T}\mathbf{S}=\mathbf{D}^{\nicefrac{-1}{2}}\mathbf{H}\mathbf{D}^{\nicefrac{-1}{2}}$
are equal to $1$. Moreover, we have

\begin{equation}
\mathbf{S}\mathbf{D}\mathbf{S}^{T}=\Sigma.\label{eq: SDS^T}
\end{equation}
Let's specify that

\begin{equation}
\lambda_{1}=\cdots=\lambda_{\tau}=\frac{\sum_{i=1}^{K}P_{i}}{\tau},
\end{equation}
and

\begin{equation}
P_{i}=c\frac{\gamma_{i}}{1+\gamma_{i}},\mbox{ for some }c>0.
\end{equation}
It can be verified that $\mathbf{e}$ majorizes $\mathbf{p}$ since
for $1\leq i\leq\tau$,

\begin{eqnarray*}
\lambda_{i} & = & \frac{c\sum_{i=1}^{K}\frac{\gamma_{i}}{1+\gamma_{i}}}{\tau},\\
 & = & c,\\
 & > & \max\left\{ P_{k},\mbox{ for }1\leq k\leq K\right\} ,
\end{eqnarray*}
where the second equality is due to the case under consideration.

Next we will check if the SINR requirements are satisfied by using
such pilot sequence and power allocation. Making use of (\ref{eq: SDS^T}),
we have

\begin{eqnarray*}
\mbox{SINR}_{i} & = & \frac{P_{i}}{\mbox{tr}\left(s_{i}^{T}\Sigma s_{i}\right)-P_{i}},\\
 & = & \frac{c\frac{\gamma_{i}}{1+\gamma_{i}}}{c-c\frac{\gamma_{i}}{1+\gamma_{i}}},\\
 & = & \gamma_{i},\mbox{ }\forall i=1,\cdots,K.
\end{eqnarray*}
Based on this result, it can be easily shown that $\mathbf{p}\in\mbox{Null}\left(\mathbf{T}-\mathbf{G}_{\mathbf{s}}^{T}\circ\mathbf{G}_{\mathbf{s}}\right)$
where the diagonal matrix $\mathbf{T}=\mbox{diag}\left(1+\nicefrac{1}{\gamma_{1}},\cdots,1+\nicefrac{1}{\gamma_{K}}\right)$.

Now we turn to the case of $\sum_{i=1}^{K}\left(\nicefrac{\gamma_{i}}{1+\gamma_{i}}\right)<\tau$.
As $f\left(x\right)=\nicefrac{x}{1+x}$ is monotonically increasing
for $x>0$, there exists a set $\{\hat{\gamma}_{i}\geq\gamma_{i}$
for $1\leq i\leq K\}$ such that $\sum_{i=1}^{K}\left(\nicefrac{\hat{\gamma}_{i}}{1+\hat{\gamma}_{i}}\right)=\tau$.
At the same time, $K\leq\left[\tau\left(\sum_{i=1}^{K}1+\nicefrac{1}{\hat{\gamma}_{i}}\right)\right]^{\nicefrac{1}{2}}$
holds. By exploiting the previous result, we can find a valid set
of $\mathbf{S}\in\mathcal{S}$ and $\mathbf{p}\succ\mathbf{0}$ for
which $\mathbf{T}=\mbox{diag}\left(1+\nicefrac{1}{\hat{\gamma}_{1}},\cdots,1+\nicefrac{1}{\hat{\gamma}_{K}}\right)$.
\end{proof}

An explanation of the constraint, $\sum_{i=1}^{K}\left(\nicefrac{\gamma_{i}}{1+\gamma_{i}}\right)\leq\tau$,
is as follows. The UE with a high SINR requirement should be allocated
with a pilot sequence which is orthogonal to others. Overall, only
$\tau$ such assignments are allowed in the system. Another thing
to note is that the given proof is constructive, within which the
method of obtaining pilot sequences and allocating sequences and powers
to UEs can be found.

A corollary which follows from Propositions \ref{prop: Prop01} and
\ref{prop:Prop02} is provided below.
\begin{cor}
\label{Corollary01}Given the identical SINR requirement $\gamma$,
$K$ UEs are admissible in the TDD Massive MIMO system if and only
if 
\end{cor}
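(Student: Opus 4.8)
The statement being proved should be that, with a common target $\gamma_i=\gamma$ for all $i$, the $K$ UEs are admissible if and only if $K\le\tau\left(1+\frac{1}{\gamma}\right)$ (equivalently, since $K\in\mathbb{Z}$, if and only if $K\le\left\lfloor\tau\frac{1+\gamma}{\gamma}\right\rfloor$). The plan is simply to read this off by specializing the two preceding propositions to the equal-SINR case and observing that the necessary condition of Proposition~\ref{prop: Prop01} and the sufficient condition of Proposition~\ref{prop:Prop02} collapse to the \emph{same} inequality in $K$; once the upper bound and the achievability threshold agree, the one-sided bounds become a genuine equivalence.

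For necessity, I would substitute $\gamma_i=\gamma$ into (\ref{eq: upper bound of admissible users}): admissibility gives $K\le\left[\tau K\left(1+\frac{1}{\gamma}\right)\right]^{\nicefrac{1}{2}}$, and squaring and dividing by $K>0$ yields $K\le\tau\left(1+\frac{1}{\gamma}\right)$. For sufficiency, I would rewrite the hypothesis $K\le\tau\left(1+\frac{1}{\gamma}\right)$ in the form $K\,\frac{\gamma}{1+\gamma}\le\tau$, which is precisely (\ref{eq: UC for General SINRs}) evaluated at $\gamma_i=\gamma$. Proposition~\ref{prop:Prop02} then furnishes a valid pair $\mathbf{S}\in\mathcal{S}$, $\mathbf{p}\succ\mathbf{0}$; by the definition of a valid set the achieved SINRs satisfy $\mbox{SINR}_i=\hat{\gamma}_i\ge\gamma_i=\gamma$ for every $i$, so all SINR requirements hold and the $K$ UEs are admissible. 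Combining the two directions gives the corollary, and if the statement is phrased with the floor one only needs to add that $K$ is an integer.

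There is essentially no hard step here: the content is entirely in the earlier propositions. The only point requiring a word of care — the closest thing to an obstacle — is two-fold: first, the bookkeeping between the real threshold $\tau(1+1/\gamma)$ and the integer $K$ (hence the floor in the final form); and second, that Proposition~\ref{prop:Prop02} only guarantees $\mbox{SINR}_i\ge\gamma$, possibly with slack via some $\hat{\gamma}_i>\gamma$, which is nonetheless sufficient for admissibility as defined. The conceptually interesting observation to highlight is that, although for heterogeneous $\{\gamma_i\}$ there is a gap between the bound of Proposition~\ref{prop: Prop01} and the region of Proposition~\ref{prop:Prop02}, this gap closes exactly when the requirements are equal, so the user capacity is pinned down precisely in that regime.
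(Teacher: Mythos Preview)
Your proposal is correct and matches the paper's approach exactly: the paper simply asserts that the corollary follows from Propositions~\ref{prop: Prop01} and~\ref{prop:Prop02}, and your argument---specializing both to $\gamma_i=\gamma$ and observing that the necessary bound $K\le[\tau K(1+1/\gamma)]^{1/2}$ and the sufficient condition $K\gamma/(1+\gamma)\le\tau$ both reduce to $K\le\tau(1+1/\gamma)$---is precisely how that inference goes. The paper even records in its subsequent Remark~4 the same key observation you highlight, that the gap between the two propositions vanishes in the identical-SINR case.
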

\begin{equation}
K\leq\left(1+\frac{1}{\gamma}\right)\tau.\label{eq: User Capacity for Identical SINRs}
\end{equation}

Unlike (\ref{eq: upper bound of admissible users}), the right-hand
side of (\ref{eq: User Capacity for Identical SINRs}) does not depend
on $K$, providing an explicit upper bound of admissible UEs. However,
this is only for identical SINR requirements. In the general case,
(\ref{eq: upper bound of admissible users}) and (\ref{eq: UC for General SINRs})
do not provide upper bounds of this kind. In order to have a consistent
interpretation of the user capacity in the general case, we intend
to characterize the user capacity as the \emph{admissible region}
$R_{\tiny{\mbox{UC}}}=\{\gamma_{1\sim K}\in\mathbb{R}^{+}|\mbox{ }\sum_{i=1}^{K}\nicefrac{\gamma_{i}}{1+\gamma_{i}}\leq\tau\}$.
It means that once the SINR requirements are located within $R_{\tiny{\mbox{UC}}}$,
the corresponding $K$ UEs are admissible. When it comes to identical
SINR requirements, this region maintains the same structure, that
is $R_{\tiny{\mbox{UC}}}=\left\{ \gamma_{1\sim K}=\gamma\in\mathbb{R}^{+}|\nicefrac{K\gamma}{1+\gamma}\leq\tau\right\} $.
Later on, this characterization will be utilized to evaluate different
joint pilot sequence design and power allocation schemes, i.e., different
combinations of $\mathbf{S}$ and $\mathbf{p}$. 

According to the present analytical results, the following remarks
can be made.
\begin{enumerate}
\item The valid pilot sequence and power allocation used in Proposition
\ref{prop:Prop02} is also referred to as the capacity-achieving pilot
sequence and power allocation. It means that any $K$ UEs having the
SINR requirements within $\mathbf{R}_{\tiny{\mbox{UC}}}$ can be admitted
by using this allocation. In the next section, it will be shown that
other non-capacity-achieving schemes can not guarantee this.
\item When using the capacity-achieving pilot sequence and power allocation,
the converse of Proposition \ref{prop:Prop02} can be shown to be
true.
\item In the case of identical SINR requirements, the pilot sequences in
use are called Welch bound equality (WBE) sequences (with properties:
$\mathbf{S}\in\mathcal{S}$, $\mathbf{S}\mathbf{S}^{T}=\frac{K}{\tau}\mathbf{I}_{\tau}$,
and $\rho_{ij}^{2}=\nicefrac{\left(K-\tau\right)}{\left(K-1\right)\tau}$
for $i\neq j$) \cite{Ulukus01,Waldron03}.
\item Generally, the regime $K\leq\left[\tau\left(\sum_{i=1}^{K}1+\nicefrac{1}{\gamma_{i}}\right)\right]^{\nicefrac{1}{2}}$
while $\sum_{i=1}^{K}\left(\nicefrac{\gamma_{i}}{1+\gamma_{i}}\right)>\tau$,
has not been characterized, in which the existence of a valid pilot
sequence and power allocation is unknown. However, when all the SINR
requirements are the same, $K\leq\left[\tau\left(\sum_{i=1}^{K}1+\nicefrac{1}{\gamma}\right)\right]^{\nicefrac{1}{2}}$
implies $\sum_{i=1}^{K}\left(\nicefrac{\gamma}{1+\gamma}\right)\leq\tau$.
In this special case, the setting, $\sum_{i=1}^{K}\left(\nicefrac{\gamma}{1+\gamma}\right)>\tau$,
is not of interest.
\end{enumerate}

\section{Comparison with Non-Capacity-Achieving Schemes}

The superiority of the proposed capacity-achieving allocation over
other existing schemes will be presented in this section. Let's first
define two pilot sequence allocation schemes which are independent
of the SINR requirements. Meanwhile, in both schemes, the transmit
power $P_{i}$ allocated to the $i$th UE is $\nicefrac{c\gamma_{i}}{1+\gamma_{i}}$
for some $c>0$.
\begin{enumerate}
\item WBE Scheme: Pilot sequences in use are the WBE sequences.
\item Finite Orthogonal Sequence (FOS) Scheme: Given a pilot sequence length
$\tau$, only $\tau$ orthogonal pilot sequences will be repeatedly
used in the pilot-contaminated regime. Assume that $K=q\tau+r$ where
$q,r\in\mathbb{Z}$ and $0\leq r<\tau$. Each pilot sequence $s_{i}$
is used by a collection $E_{i}$ of UEs. Let $\mbox{card}\left(E_{i}\right)=q+1$
for $1\leq i\leq r$, $\mbox{card}\left(E_{i}\right)=q$ for $r+1\leq i\leq\tau$,
and $E_{i}\cap E_{j}=\emptyset$ for $i\neq j$. 
\end{enumerate}
The following lemmas will show the potential reduction of the user
capacity when the WBE and FOS scheme are applied to the case of general
SINR constraints.
\begin{lem}
The general SINR requirements $\gamma_{i}$ are satisfied by using
the WBE scheme if and only if

\begin{equation}
\sum_{i=1}^{K}\left(\frac{\gamma_{i}}{1+\gamma_{i}}\right)\leq\min\left\{ \tau,\mbox{ }\kappa-\left(\kappa-1\right)\left(\frac{\gamma_{\tiny{\mbox{max}}}}{1+\gamma_{\tiny{\mbox{max}}}}\right)\right\} ,\label{eq: WBE user bound}
\end{equation}
where $\kappa=\frac{\left(K-1\right)\tau}{\left(K-\tau\right)}$ and
$\gamma_{\tiny{\mbox{max}}}=\max\left\{ \gamma_{i},\mbox{ }1\leq i\leq K\right\} $.\label{lem:Lemma02}
\end{lem}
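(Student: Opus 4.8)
The plan is to write the SINR delivered to each UE by the WBE scheme in closed form and then turn the $K$ constraints $\mbox{SINR}_{i}\geq\gamma_{i}$ into a single scalar inequality. Since the WBE sequences satisfy $\rho_{ij}^{2}=\frac{K-\tau}{\left(K-1\right)\tau}=\frac{1}{\kappa}$ for every $i\neq j$, the interference sum in (\ref{eq: SINR approximation}) collapses to $\sum_{j\neq i}\rho_{ji}^{2}P_{j}=\frac{1}{\kappa}\bigl(\sum_{j=1}^{K}P_{j}-P_{i}\bigr)$, so that $\mbox{SINR}_{i}=\kappa P_{i}/\bigl(\sum_{j=1}^{K}P_{j}-P_{i}\bigr)$. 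Substituting the prescribed powers $P_{i}=\frac{c\gamma_{i}}{1+\gamma_{i}}$ makes the scaling constant $c$ cancel; writing $t_{i}\triangleq\frac{\gamma_{i}}{1+\gamma_{i}}$ and $S\triangleq\sum_{j=1}^{K}t_{j}$ one obtains $\mbox{SINR}_{i}=\frac{\kappa t_{i}}{S-t_{i}}$, the denominator being strictly positive in the pilot-contaminated regime $K>\tau\geq1$.

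I would then impose $\mbox{SINR}_{i}\geq\gamma_{i}$, using $\gamma_{i}=\frac{t_{i}}{1-t_{i}}$ with $1-t_{i}>0$. Cancelling the positive factor $t_{i}$ and clearing the positive denominators reduces the $i$th constraint to $\kappa\left(1-t_{i}\right)\geq S-t_{i}$, i.e., $S\leq\kappa-\left(\kappa-1\right)t_{i}$. As $\kappa\geq1$, the right-hand side is non-increasing in $t_{i}$, so the whole system holds if and only if its tightest member does, namely $S\leq\kappa-\left(\kappa-1\right)t_{\max}$ with $t_{\max}=\frac{\gamma_{\tiny{\mbox{max}}}}{1+\gamma_{\tiny{\mbox{max}}}}$; this is precisely the second argument of the $\min$ in (\ref{eq: WBE user bound}), so both directions of the ``if and only if'' are in hand.

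The remaining point is that this last inequality already forces $S\leq\tau$, so that the $\min$ with $\tau$ costs nothing (and, as a by-product, displays the WBE-achievable region as a subset of $R_{\tiny{\mbox{UC}}}$). Two ingredients suffice: the elementary bound $S=\sum_{j}t_{j}\leq K\,t_{\max}$, from monotonicity of $x\mapsto\frac{x}{1+x}$, and the algebraic identity $\kappa-\left(\kappa-1\right)\frac{\tau}{K}=\tau$, verified by inserting $\kappa=\frac{\left(K-1\right)\tau}{K-\tau}$. Indeed, if $S>\tau$, then $t_{\max}\geq S/K>\tau/K$, and since $x\mapsto\kappa-\left(\kappa-1\right)x$ is non-increasing and equals $\tau$ at $x=\tau/K$, we would get $\kappa-\left(\kappa-1\right)t_{\max}\leq\tau<S$, contradicting $S\leq\kappa-\left(\kappa-1\right)t_{\max}$. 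I expect the main obstacle to be exactly this last step — noticing the identity $\kappa-\left(\kappa-1\right)\tau/K=\tau$ and running the short two-case comparison that absorbs the $\tau$ into the $\min$ — together with the routine but necessary bookkeeping that keeps every denominator in the SINR manipulations strictly positive (which relies on $K>\tau\geq1$ and $\gamma_{i}>0$); everything else is elementary algebra.
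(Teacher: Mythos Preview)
Your argument is correct. The paper does not actually give a proof of this lemma; it only writes ``Please refer to \cite{Shen14_MassiveMIMO_TWC} due to space limitations,'' so there is nothing in the present paper to compare your approach against. What you do is the natural route: exploit the uniform cross-correlation $\rho_{ij}^{2}=1/\kappa$ of WBE sequences to reduce (\ref{eq: SINR approximation}) to $\mbox{SINR}_{i}=\kappa t_{i}/(S-t_{i})$, rewrite each constraint as $S\leq\kappa-(\kappa-1)t_{i}$, observe that the binding one is at $t_{\max}$ since $\kappa\geq1$, and then verify via the identity $\kappa-(\kappa-1)\tau/K=\tau$ together with $S\leq K\,t_{\max}$ that the bound $S\leq\kappa-(\kappa-1)t_{\max}$ already forces $S\leq\tau$, so the $\min$ is harmless. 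All sign checks (positivity of $S-t_{i}$ and $1-t_{i}$, and $\kappa\geq1$ in the pilot-contaminated regime $K>\tau\geq1$) are handled, so both directions of the equivalence go through.
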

\begin{proof} Please refer to \cite{Shen14_MassiveMIMO_TWC} due
to space limitations. \end{proof}
\begin{lem}
The general SINR requirements $\gamma_{i}$ are satisfied by using
the FOS scheme if and only if\label{lem:Lemma03}

\begin{equation}
\sum_{k\in E_{i}}\left(\frac{\gamma_{k}}{1+\gamma_{k}}\right)\leq1,\mbox{ for }1\leq i\leq\tau,\label{eq: FOS constraints}
\end{equation}
and

\begin{equation}
\sum_{i=1}^{K}\left(\frac{\gamma_{i}}{1+\gamma_{i}}\right)\leq\tau.\label{eq: FOS constraints 02}
\end{equation}

\end{lem}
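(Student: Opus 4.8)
The plan is to exploit the very special correlation structure induced by the FOS scheme, which makes the SINR constraints decouple group by group. First I would record that, because the $\tau$ pilot sequences are mutually orthogonal and the sets $E_1,\dots,E_\tau$ partition the $K$ UEs, we have $\rho_{kl}^2 = 1$ whenever $k$ and $l$ share a pilot sequence (i.e.\ lie in a common $E_i$) and $\rho_{kl}^2 = 0$ otherwise; equivalently, after relabelling UEs so that those using the same sequence are consecutive, $\mathbf{G}_{\mathbf{s}}$ is block diagonal with all-ones blocks of sizes $q+1$ (for $i\le r$) and $q$ (for $i>r$). Substituting this into (\ref{eq: SINR approximation}) collapses the interference sum, so that for $k\in E_i$,
\[
\mbox{SINR}_k = \frac{P_k}{\sum_{l\in E_i,\,l\neq k}P_l}.
\]

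Next I would insert the prescribed FOS power allocation $P_k = c\,\gamma_k/(1+\gamma_k)$ with $c>0$. The requirement $\mbox{SINR}_k\ge\gamma_k$ is $P_k \ge \gamma_k\sum_{l\in E_i,\,l\neq k}P_l$ (the denominator being strictly positive whenever $\mbox{card}(E_i)\ge 2$; if $\mbox{card}(E_i)=1$ the SINR is infinite and the claim is trivial). Dividing through by the positive constant $c\gamma_k$ and using $\frac{1}{1+\gamma_k}+\frac{\gamma_k}{1+\gamma_k}=1$ gives
\[
\sum_{l\in E_i}\frac{\gamma_l}{1+\gamma_l} = \frac{\gamma_k}{1+\gamma_k} + \sum_{l\in E_i,\,l\neq k}\frac{\gamma_l}{1+\gamma_l} \le 1 .
\]
The key observation is that this inequality is $k$-free: it is literally the same condition for every UE in $E_i$. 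Hence the SINR requirements of all members of $E_i$ hold simultaneously if and only if (\ref{eq: FOS constraints}) holds for that index, and this equivalence is exactly what gives both directions of the lemma for group $i$. Ranging over $i=1,\dots,\tau$ yields (\ref{eq: FOS constraints}) in full, in both directions.

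Finally, (\ref{eq: FOS constraints 02}) is obtained by summing (\ref{eq: FOS constraints}) over $i$: since $\{E_i\}_{i=1}^\tau$ partitions $\{1,\dots,K\}$, $\sum_{i=1}^\tau\sum_{l\in E_i}\frac{\gamma_l}{1+\gamma_l} = \sum_{k=1}^K\frac{\gamma_k}{1+\gamma_k}\le\tau$, so (\ref{eq: FOS constraints 02}) is in fact implied by (\ref{eq: FOS constraints}) and is stated only to make the comparison with the admissible region $R_{\tiny{\mbox{UC}}}$ transparent. I do not expect a genuine obstacle here; the argument is essentially a one-line cancellation once the block structure of $\mathbf{G}_{\mathbf{s}}$ is noted. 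The only point deserving a moment of care is establishing the \emph{equivalence} (not merely sufficiency): this works precisely because, after $P_k$ cancels, the per-group condition becomes symmetric across $E_i$, so the most demanding and the least demanding UE in a group impose exactly the same constraint.
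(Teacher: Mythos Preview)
Your argument is correct: the block-diagonal structure of $\mathbf{G}_{\mathbf{s}}$ under FOS reduces the SINR constraint for $k\in E_i$ to $\sum_{l\in E_i}\frac{\gamma_l}{1+\gamma_l}\le 1$, which is indeed $k$-free and yields the equivalence; and (\ref{eq: FOS constraints 02}) follows by summing over the partition. The paper itself does not spell out a proof here---it simply says ``Similar to the proof of Lemma~\ref{lem:Lemma02}'', which in turn is deferred to an external reference---so your direct computation is exactly the intended route and fills in what the paper omits.
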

\begin{proof} Similar to the proof of Lemma \ref{lem:Lemma02}. \end{proof}

\begin{figure}[t]
\includegraphics[width=9cm,height=6cm]{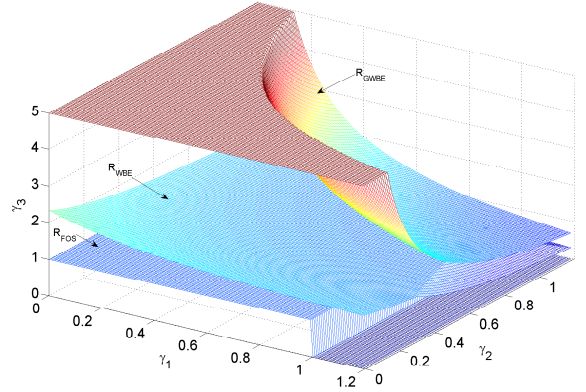}

\caption{Upper boundaries of admissible regions for the GWBE, WBE, and FOS
schemes.\label{fig: GWBE & FOS Feasible Regions}}
\end{figure}

To verify the results in Proposition \ref{prop:Prop02} and in Lemmas
\ref{lem:Lemma02} and \ref{lem:Lemma03}, we consider a pilot-contaminated
Massive MIMO system with $K=6$ and $\tau=3$. By fixing certain SINR
requirements $\left\{ \gamma_{4}=\gamma_{5}=\gamma_{6}=1\right\} $,
we look into admissible regions of the remaining SINR requirements
given by

\begin{equation}
R_{\mbox{\tiny{GWBE}}}=\left\{ \gamma_{1\sim3}\in\mathbb{R}^{+}|\mbox{ }\sum_{i=1}^{3}\nicefrac{\gamma_{i}}{1+\gamma_{i}}\leq\nicefrac{3}{2}\right\} ,
\end{equation}
 
\begin{multline}
R_{\mbox{\tiny{WBE}}}=R_{\mbox{\tiny{GWBE}}}\cap\left\{ \gamma_{1\sim3}\in\mathbb{R}^{+}|\mbox{ }\right.\\
\left.\sum_{j=1}^{3}\nicefrac{\gamma_{j}}{1+\gamma_{j}}\leq\left(\nicefrac{7}{2}-\nicefrac{4\gamma_{i}}{1+\gamma_{i}}\right),\mbox{ for }1\leq i\leq3\right\} ,
\end{multline}
and

\begin{equation}
R_{\mbox{\tiny{FOS}}}=R_{\mbox{\tiny{GWBE}}}\cap\left\{ \gamma_{1\sim3}\in\mathbb{R}^{+}|\mbox{ }\gamma_{i}\leq1,\mbox{ for }1\leq i\leq3\right\} ,
\end{equation}
for the GWBE, WBE, and FOS schemes. Note that it is implicitly assumed
that $E_{1}=\left\{ \mbox{UE}_{1},\mbox{UE}_{4}\right\} $, $E_{2}=\left\{ \mbox{UE}_{2},\mbox{UE}_{5}\right\} $,
and $E_{3}=\left\{ \mbox{UE}_{3},\mbox{UE}_{6}\right\} $ for the
FOS scheme. The upper boundaries of these regions in the positive
orthant are plotted in Fig. \ref{fig: GWBE & FOS Feasible Regions}.
For the GWBE scheme, an extra restriction $\gamma_{3}=\mbox{min}\left\{ \gamma_{3},5\right\} $
is placed as the admissible $\gamma_{3}$ can go to infinity. It can
be observed that the boundary surface of $R_{\mbox{\tiny{GWBE}}}$
lies well above those of $R_{\mbox{\tiny{WBE}}}$ and $R_{\mbox{\tiny{FOS}}}$.
This implies that $R_{\mbox{\tiny{GWBE}}}$ contains more admissible
points than $R_{\mbox{\tiny{WBE}}}$ and $R_{\mbox{\tiny{FOS}}}$,
so more general SINR constraints $\gamma_{1\sim3}$ can be met in
the GWBE scheme.

To explore the effects of having different numbers of UEs, Fig. \ref{fig: SINR vs UEs}
plots the achievable SINR versus the number of UEs given the fixed
$\tau=3$. For the FOS scheme, the grouping among UEs for any given
$K$ is assumed to be optimal in the sense of maximizing the achievable
SINR. It can be observed that increasing $K$, making pilot contamination
more serious, leads to decreasing achievable SINRs for all three schemes.
Our proposed GWBE scheme, however, attains relatively higher SINRs
compared with the WBE and FOS schemes. Interestingly, the WBE scheme
does not always outperform the FOS scheme for $K<7$, but does so
for $K\geq7$. This highlights that the GWBE scheme exhibits a consistent
superiority over the FOS scheme compared with the WBE scheme.

By specifying the SINR-requirement pattern of $K=3l$ UEs, how many
UEs are admissible for a given pilot length is depicted in Fig. \ref{fig: UEs versus Tau}.
It can be observed that the number of admissible UEs scales almost
linearly with the pilot length whatever scheme is adopted. This linear
relationship directly demonstrates how the user capacity is limited
by the pilot length. Also shown in the same figure, the GWBE scheme,
without doubt, substantially outperforms the other two schemes in
terms of admitting more UEs. In addition, two non-capacity-achieving
schemes exhibit comparable user capacities especially at short pilot
lengths.

\begin{figure}[t]
\includegraphics[width=9cm,height=6cm]{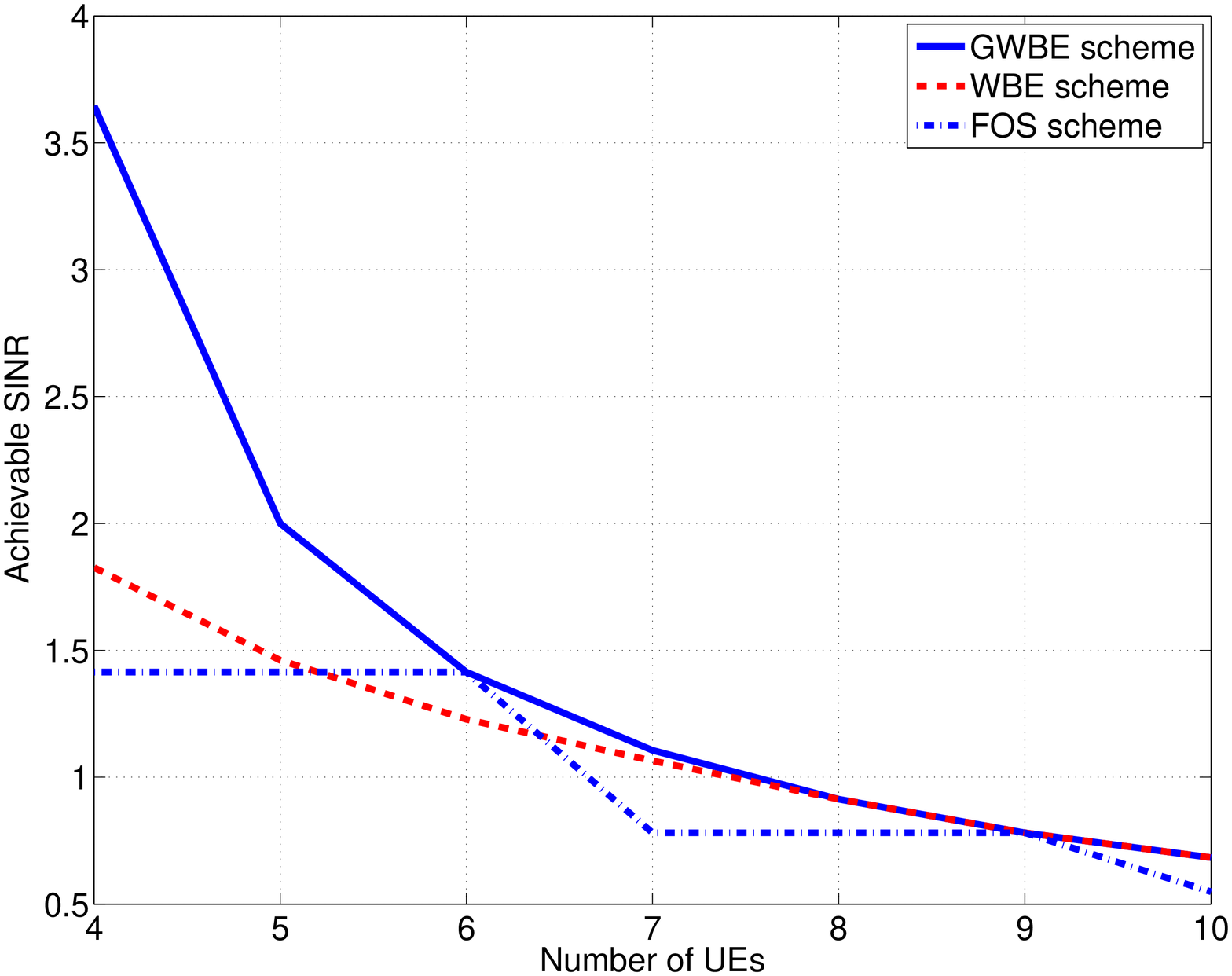}

\caption{Achievable SINR versus the number $K$ of UEs for the GWBE, WBE, and
FOS schemes given a fixed SINR-requirement pattern, that is $\left\{ \gamma_{1}=\gamma_{2}=\gamma_{3}=\gamma,\mbox{ }\gamma_{4}=\cdots=\gamma_{K}=\nicefrac{\gamma}{2}\right\} $.
\label{fig: SINR vs UEs}}
\end{figure}

\begin{figure}[t]
\noindent \begin{centering}
\includegraphics[width=9cm,height=6cm]{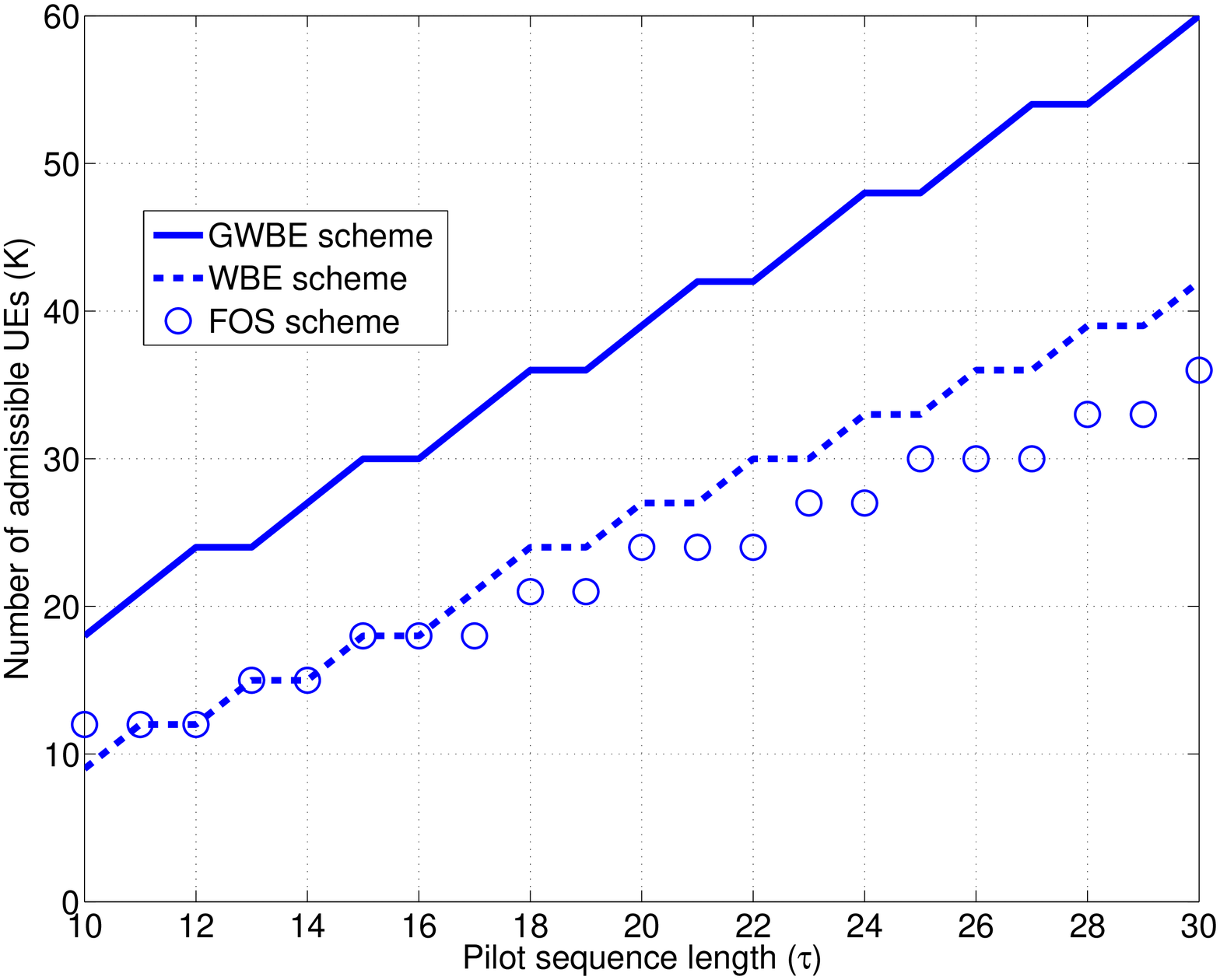}
\par\end{centering}

\caption{Number of admissible UEs versus pilot sequence length for the GWBE,
WBE, and FOS schemes, given a fixed SINR-requirement pattern, that
is $\left\{ \gamma_{1\sim l}=\nicefrac{1}{3},\mbox{ }\gamma_{\left(l+1\right)\sim2l}=1,\mbox{ }\gamma_{\left(2l+1\right)\sim3l}=3\right\} $.\label{fig: UEs versus Tau}}
\end{figure}

\section{Conclusions}

This paper has investigated the user capacity of downlink TDD Massive
MIMO systems in the pilot-contaminated regime. The necessary condition
for admitting a group of UEs with general SINR requirements has been
provided. It shows an intrinsic capacity upper bound due to the limited
length of pilot sequences. Meanwhile, the capacity-achieving pilot
sequence and power allocation, which can achieve the identified user
capacity and satisfy the SINR requirements, has been proposed and
compared with the non-capacity-achieving WBE and FOS schemes. The
results of this study indicate that the capacity-achieving allocation
is necessary for the purpose of enhancing the user capacity.

\bibliographystyle{IEEEtran}
\bibliography{MassiveMIMO}

% Generated by IEEEtran.bst, version: 1.13 (2008/09/30)
\begin{thebibliography}{10}
\providecommand{\url}[1]{#1}
\csname url@samestyle\endcsname
\providecommand{\newblock}{\relax}
\providecommand{\bibinfo}[2]{#2}
\providecommand{\BIBentrySTDinterwordspacing}{\spaceskip=0pt\relax}
\providecommand{\BIBentryALTinterwordstretchfactor}{4}
\providecommand{\BIBentryALTinterwordspacing}{\spaceskip=\fontdimen2\font plus
\BIBentryALTinterwordstretchfactor\fontdimen3\font minus
  \fontdimen4\font\relax}
\providecommand{\BIBforeignlanguage}[2]{{%
\expandafter\ifx\csname l@#1\endcsname\relax
\typeout{** WARNING: IEEEtran.bst: No hyphenation pattern has been}%
\typeout{** loaded for the language `#1'. Using the pattern for}%
\typeout{** the default language instead.}%
\else
\language=\csname l@#1\endcsname
\fi
#2}}
\providecommand{\BIBdecl}{\relax}
\BIBdecl

\bibitem{Gao11linear}
X.~Gao, O.~Edfors, F.~Rusek, and F.~Tufvesson, ``{Linear pre-coding performance
  in measured very-large MIMO channels},'' in \emph{Proc. IEEE Veh. Tech. Conf.
  (VTC)}, San Francisco, CA, Sept. 2011, pp. 1--5.

\bibitem{Marzetta06}
T.~Marzetta, ``{How much training is required for multiuser MIMO},'' in
  \emph{Proc. 40th Asilomar Conf. Signals, Syst., Comput. (ACSSC)}, Pacific
  Grove, CA, Oct. 2006, pp. 359--363.

\bibitem{Marzetta10}
------, ``{Noncooperative cellular wireless with unlimited numbers of base
  station antennas},'' \emph{IEEE Trans. Wireless Commun.}, vol.~9, no.~11, pp.
  3590--3600, Nov. 2010.

\bibitem{Jose11}
J.~Jose, A.~Ashikhmin, T.~Marzetta, and S.~Vishwanath, ``{Pilot contamination
  and precoding in multi-cell TDD systems},'' \emph{IEEE Trans. Wireless
  Commun.}, vol.~10, no.~8, pp. 2640--2651, Aug. 2011.

\bibitem{Yin13}
H.~Yin, D.~Gesbert, M.~Filippou, and Y.~Liu, ``{A coordinated approach to
  channel estimation in large-scale multiple-antenna systems},'' \emph{IEEE J.
  Sel. Areas Commun.}, vol.~31, no.~2, pp. 264--273, Feb. 2013.

\bibitem{Mueller14}
R.~Mueller, L.~Cottatellucci, and M.~Vehkapera, ``{Blind pilot
  decontamination},'' \emph{IEEE J. Sel. Topics Signal Process.}, vol.~8,
  no.~5, pp. 773--786, Oct. 2014.

\bibitem{Ngo13EE}
H.~Q. Ngo, E.~Larsson, and T.~Marzetta, ``{Energy and spectral efficiency of
  very large multiuser MIMO systems},'' \emph{IEEE Trans. Commun.}, vol.~61,
  no.~4, pp. 1436--1449, Apr. 2013.

\bibitem{Hoydis13ULDL}
J.~Hoydis, S.~ten Brink, and M.~Debbah, ``{Massive MIMO in the UL/DL of
  cellular Networks: How many antennas do we need?}'' \emph{IEEE J. Sel. Areas
  Commun.}, vol.~31, no.~2, pp. 160--171, Feb. 2013.

\bibitem{Viswanath99OptSeqPow}
P.~Viswanath, V.~Anantharam, and D.~Tse, ``{Optimal sequences, power control,
  and user capacity of synchronous CDMA systems with linear MMSE multiuser
  receivers},'' \emph{IEEE Trans. Inf. Theory}, vol.~45, no.~6, pp. 1968--1983,
  Sept. 1999.

\bibitem{Marshall10}
A.~W. Marshall, I.~Olkin, and B.~C. Arnold, \emph{Inequalities: Theory of
  Majorization and Its Applications}.\hskip 1em plus 0.5em minus 0.4em\relax
  Springer, 2010.

\bibitem{Viswanath99}
P.~Viswanath and V.~Anantharam, ``{Optimal sequences and sum capacity of
  synchronous CDMA systems},'' \emph{IEEE Trans. Inf. Theory}, vol.~45, no.~6,
  pp. 1984--1991, Sept. 1999.

\bibitem{Ulukus01}
S.~Ulukus and R.~Yates, ``{Iterative construction of optimum signature sequence
  sets in synchronous CDMA systems},'' \emph{IEEE Trans. Inf. Theory}, vol.~47,
  no.~5, pp. 1989--1998, Jul. 2001.

\bibitem{Waldron03}
S.~Waldron, ``{Generalized Welch bound equality sequences are tight frames},''
  \emph{IEEE Trans. Inf. Theory}, vol.~49, no.~9, pp. 2307--2309, Sept. 2003.

\bibitem{Shen14_MassiveMIMO_TWC}
J.-C. Shen, J.~Zhang, and K.~B. Letaief, ``{Downlink user capacity of massive
  MIMO under pilot contamination},'' under review in \emph{IEEE Trans. Wireless
  Commun.}

\end{thebibliography}

\end{document}